\newtheorem{remark}{Remark}[section]
\newtheorem{theorem}{Theorem}[section]
\newtheorem{definition}[theorem]{Definition}
\newtheorem{proposition}[theorem]{Proposition}
\numberwithin{equation}{section}
\author{Marios Adamoudis}
\address{Department of Mathematics, 
Aristotle University of Thessaloniki, 54 124, Thessaloniki, Greece}
\email{aamarios@math.auth.gr}
\author{K. A. Draziotis}
\address{Department of Informatics, Aristotle University of Thessaloniki, 54 124, Thessaloniki, Greece}
\email{drazioti@csd.auth.gr}
\begin{document}
	\title[Message recovery attack to NTRU]{Message recovery attack to NTRU using a lattice independent from the public key}

\keywords{Public Key Cryptography; NTRU Cryptosystem; Lattices; LLL algorithm; Closest Vector Problem; Babai's Nearest Plane Algorithm.}
\subjclass[2010]{94A60}
\maketitle	

 \begin{abstract}
In the present paper we introduce a new attack on NTRU-HPS cryptosystem using lattice theory and Babai's Nearest Plane Algorithm. This attack generalizes the classic CVP attack on NTRU. Finally, the attack is illustrated by many examples.
	
 \end{abstract}

	\section{Introduction}
	NTRU\footnote{$N$-th degree Truncated polynomial Ring Units} cryptosystem  was proposed in 1996 by the three mathematicians, Pipher, Hoffstein and Silverman, \cite{hoffstein}. It is based on certain hard problems involving lattices and can be used as an encryption system, NTRUencrypt, as well as a digital signature, like the RSA system. Notably, NTRU seems immune to quantum attacks, whereas RSA/Diffie-Hellman are vulnerable to Shor's quantum attack \cite{shor}. It was among the seven finalists of NIST's competition for Public-Key Post-Quantum Cryptographic
	Algorithms\footnote{\url{https://csrc.nist.gov/news/2020/pqc-third-round-candidate-announcement}}, but NIST will not standardize it. However, this system remains a great choice, for instance it is implemented in openssh 9.0\footnote{\url{https://www.openssh.com/txt/release-9.0}. It is implemented in a hybrid scheme, Streamlined NTRU Prime $+$ x25519 ECDH ey exchange method.}. 
	
For preliminaries in  NTRU, see \cite{hoffstein,howgrave,andreas}. 	In the present work we shall study the original NTRU system, namely NTRU-HPS. There are also two other flavors of the original NTRU $:$ NTRU-Prime and NTRU-HRSS \cite{HRSS}. The former avoids decryption failures and rings associated to cyclotomics.  Furthermore, NTRU-Prime has two sub variants $:$ the streamlined NTRU-Prime, which is similar to NTRU-HPS, and NTRU-LPRime \cite{bernstein1, bernstein2},  but it is based on the non-cyclotomic NTRU problem. All the three flavors passed to the second round of NIST's competition. In the third round NTRU-HPS and NTRU-HRSS (which have been merged) is one of the seven finalists\footnote{\url{https://csrc.nist.gov/Projects/post-quantum-cryptography/round-3-submissions}},	and NTRU-prime is an alternate candidate for round 3. For differences between the three flavors see \cite{faq}.
		
		Before we present our contribution in section \ref{sec:previous_work}, we provide some necessary definitions for NTRU cryptosystem and some classic attacks on it.
\subsection{NTRU cryptosystem}\label{sec:ntru}
	\subsubsection{Convolution Rings}
	\begin{definition}
		Let $N$ be a positive integer. We call the ring
		$$R = {\mathbb{Z}}[x]/\langle x^N-1 \rangle$$
		the ring of convolution polynomials. If we consider the ring $\mod{q}$ i.e.
		$$R_q = {\mathbb{Z}}_q[x]/\langle x^N-1 \rangle$$
		we call it the ring of convolution polynomials $\mod{q}.$
	\end{definition}
	We set ${\mathbb{F}}[x]$ to be either $R$ or $R_q.$
	The degree of a polynomial ${\bf a}(x)$ in ${\mathbb{F}}[x]$ is $<N.$ If there is a monomial of order $
	\ge N,$ say $x^{j},$ then $x^{j}=x^{N i+ \ell} =x^{\ell}$ inside the ring ${\mathbb{F}}[x].$
	Also, instead of ${\bf a}(x)$ we can consider the vector ${\bf a}$ with coordinates  the coefficients of ${\bf a}(x).$
	Let 
	$${\bf a}(x)=a_0+a_1x+\cdots+a_{N-1}x^{N-1}$$
	and
	$${\bf b}(x)=b_0+b_1x+\cdots+b_{N-1}x^{N-1}.$$	
	We define the star multiplication $\star$ in $R	$ as follows,
	$${\bf a}(x)\star {\bf b}(x) = {\bf c}(x)=c_0+c_1x+\cdots+c_{N-1}x^{N-1},$$
	where 
	$$c_k = \sum_{i+j\equiv k\pmod{N}}a_i b_{k-j},\ 0\leq k\leq N-1,$$
	and $0\leq i,j\leq N-1.$
	If ${\bf a}(x)$ and ${\bf b}(x)$ are in $R_q,$ then we 
	compute the star product using the 
	previous formula and then, we reduce $\mod{q}.$
	\subsubsection{Circulant matrices and star multiplication}
Let ${\bf h}(x)=h_0+h_1x+\cdots +h_{N-1}x^{N-1}$ be a polynomial in $R.$ We represent polynomials of $R$ as integer vectors. For instance we represent ${\bf h}(x)$ as the vector 
		$${\bf h}=(h_0,...,h_{N-1}).$$
Consider the circulant matrix
	\begin{equation}\label{formula:circulant}
		C({\bf h})=
		\left[\begin{array}{cccc}
   h_0    &  h_1   & \dots & h_{N-1}  \\
   h_{N-1}&  h_0   & \dots & h_{N-2}  \\
    \vdots & \vdots & \ddots & \vdots   \\
   h_1    &   h_2  &  \dots &   h_0  \\
		\end{array}\right].
		\end{equation}
			Then, we can easily see that, if ${\bf a}(x)\in R,$ then we get
		$${\bf a}\star{\bf h} = [{\bf a}]C({\bf h}),$$
where $[{\bf a}]$ is the row-matrix with entries the coordinates of ${\bf a}.$
	\subsubsection{NTRU-HPS cryptosystem}
	Alice chooses public parameters $(N,p,q,d)$ with $N$ and $p$ primes, $\gcd(N,q) = \gcd(p,q) = 1.$ To get an idea of the size and form of the parameters, say $p$ is small (usually $3$) and $N,q$ are large (of the same order) and $q$ is a power of $2.$  
	The degree parameter $N$ is chosen prime due to the attack of Gentry \cite{gentry} when $N$ is composite. What is more, a prime integer $N$ maximizes the probability an element of $R_q$ be invertible in the case where $q$ is a prime power. Also, we define the set of ternary polynomials 
	${\mathcal{T}}(d_1,d_2)\subset R,$ be the polynomials of $R$ with $d_1$ entries equal to one, $d_2$ entries equal to minus one, and the remaining entries are zero.	When we write ${\mathcal{L}}_F$ for some polynomial $F\in R,$ we mean a subset of ${\mathcal{T}}(d_1,d_2)$ for some $d_1,d_2,$ relatively small with respect to $q.$
	Alice chooses her private key $({\bf f}(x),{\bf g}(x)),$ such that,	${\bf f}(x) \in {\mathcal{L}}_f$ and ${\bf g}(x) \in {\mathcal{L}}_g$ \footnote{in the original NTRU-HPS, $\mathcal{L}_f=\mathcal{T}(d+1,d)$ (or $\mathcal{L}_f=\{1+pG:G\in{\mathcal{T}}(d,d) \}$), $\mathcal{L}_g=\mathcal{T}(d,d).$}, where ${\bf f}(x)$ is invertible in $R_{q}$ and $R_{p}$. If ${\bf f}(x)$ is invertible, then the inverses are easily computed in $R_p$ and $R_q$ by using Euclidean algorithm and Hensel's Lemma. Let ${\bf F}_q(x)$ and ${\bf F}_p(x)$ be the inverses of ${\bf f}(x)$ in $R_{q}$ and $R_{p},$ respectively. Alice next computes 
	$${\bf h}(x) = {\bf F}_q(x) \star {\bf g}(x)\mod{q}.$$
	The polynomial ${\bf h}(x)$ is Alice's public key. 
The problem of distinguishing ${\bf h}(x)$ from uniform elements in $R_q$ is called {\it decision NTRU problem.} Whilst, the problem of finding the private key $({\bf f}(x),{\bf g}(x))$ is called, {\it search NTRU problem.} 	
	
	Bob's plaintext is a polynomial ${\bf m}(x) \in R$ whose coefficients are integers in the interval $[-\frac{1}{2}(p-1),\frac{1}{2}(p-1)].$ In other words, the plaintext ${\bf m}(x)$ is the centerlift of a polynomial\footnote{that is, reduction of the coefficients into the interval $(-p/2,p/2].$} in $R_{p}$. Thus, if $p=3,$ then the message is the centerlift of a ternary polynomial. Bob chooses a random ephemeral key\footnote{In NTRU-HRSS the public key is ${\bf h}(x) = (x-1)\star {\bf F}_q(x) \star {\bf g}(x)\mod{q}$ and the choice of ${\bf f},{\bf g},{\bf r},{\bf m}$ is a little different from NTRU-HPS, but again all the polynomials are short ternary polynomials.}  
	${\bf r}(x) \in {\mathcal{L}}_r$ (in the original NTRU-HPS, ${\mathcal{L}}_r={\mathcal{T}}(d,d)$) and computes 
	\begin{equation}\label{definition_of_encryption}
	{\bf e}(x) \equiv p{\bf r}(x) \star {\bf h}(x) + {\bf m}(x) \, \bmod\,  q.
	\end{equation} 
	Finally, Bob sends to Alice the ciphertext 
	${\bf e}(x)\in R_{q}.$ 
	
	To decrypt, Alice computes $${\bf a}(x) \equiv {\bf f}(x) \star {\bf e}(x) \, \bmod\,  q.$$ Then, she center lifts ${\bf a}(x)$ to an element of $R$ say ${\bf a}'(x),$  and she finally computes,
	$${\bf b}(x) \equiv {\bf F}_p(x) \star {\bf a}'(x) \, \bmod\,  p.$$ Then, ${\bf b}(x)$ is equal to the plaintext ${\bf m}(x)$ (this is true when a simple inequality between $d,q$ and $N$ is satisfied).
	\subsubsection{Background on lattices}
	
	We recall some well-known facts about lattices.
	
	Let ${{\bf{b}}_1,{\bf{b}}_2,\ldots,{\bf{b}}_n}$ be linearly independent vectors of ${\mathbb{R}}^{m}$.
	The set 
	$$\mathcal{L} = \bigg{\{} \sum_{j=1}^{n}\alpha_j{\bf{b}}_j :
	\alpha_j\in\mathbb{Z}, 1\leq j\leq n\bigg{\}}$$
	is called  a {\em lattice} and 
	the finite vector set $\mathcal{B} = \{{\bf{b}}_1,\ldots,{\bf{b}}_n\}$ is called a basis of 
	the lattice $\mathcal{L}$. 
	All the bases of $\mathcal{L}$ have the same number of elements, say $n,$ which is called
	{\em dimension} or {\em rank} of $\mathcal{L}$. If $n=m$, then	the lattice $\mathcal{L}$ is said to have {\em full rank}. 
	We denote by $M$ the $n\times m$ matrix  having as rows the vectors 
	${\bf{b}}_1,\ldots,{\bf{b}}_n$. 
	If $\mathcal{L}$ has full rank, then the {\em volume} of the lattice
	$\mathcal{L}$ is defined to be the positive number
	$|\det{M}|.$  The volume, as well as the rank, are independent of the basis $\mathcal{B}$. It is denoted 
	by $vol(\mathcal{L})$ or $\det{\mathcal{L}}$ (see also \cite{gal}).
	If ${\bf v}\in \mathbb{R}^m$, then $\|{\bf v}\|$ denotes, as usually, the
	Euclidean norm of ${\bf v}$.  Further,  we denote by  $\lambda_1(\mathcal{L})$ the least of the lengths of vectors of 	$ \mathcal{L}-\{ {\bf 0} \}$. Finally, if ${\bf t}\in {\mathbb{R}}^m$, then
	with $dist(\mathcal{L},{\bf t})$ we denote $\min\{\|{\bf v}-{\bf t}\|: {\bf v}\in \mathcal{L} \}$.
	
	There are two main problems in integer lattices. The Shortest Vector Problem (SVP) and the Closest Vector Problem (CVP) and their approximation versions.
	 SVP is defined as follows: Given a lattice $\mathcal{L},$ find a non zero vector ${\bf u}\in \mathcal{L}$ such that, for every non zero ${\bf u}'\in \mathcal{L}$
	we have:
	$$\|{\bf u}|\leq \|{\bf u}'\|.$$
	
	We define the approximate Closest Vector Problem $CVP_{\gamma_{n}}(\mathcal{L})$ (for some $\gamma_n\geq 1$) as follows:
	Given a lattice $\mathcal{L}$ and a vector 
	${\bf t}\in {\mathbb{R}}^m,$ 
	find a lattice vector ${\bf u}$ such that, for every ${\bf u}'\in \mathcal{L}$
	we have:
	$$\|{\bf u}-{\bf t}\|\leq \gamma_n \|{\bf u}'-{\bf t}\|.$$ 	
	We say that we have a CVP oracle, if we have an efficient probabilistic algorithm 
	that solves ${\rm CVP}_{\gamma_n},$ for $\gamma_n=1.$ To solve 
	${\rm{CVP}}_{\gamma_n}$, we
	usually use Babai's algorithm \cite[Chapter 18]{gal} (which has polynomial running
	time). In fact, combining this algorithm with the LLL algorithm, we solve 
	${\rm{CVP}}_{\gamma_n}(\mathcal{L})$ for some lattice
	$\mathcal{L}\subset {\mathbb{Z}}^m$ having $\gamma_n = 2^{n/2}$ 
	and $n=rank(\mathcal{L}),$ in polynomial time. For  more details on Babai algorithm see also \cite[Section 2]{mar2}
	\subsubsection{SVP and NTRU}\label{SVPNTRU}
	Let the lattice $L_{\bf h}$ generated by the rows of the matrix
	$$M_{\bf h}=
		\left[\begin{array}{c|c}
			I_N & C({\bf h})  \\
			\hline
			{\bf 0}_N & qI_N   \\
		\end{array}\right],
			$$ 
where $C({\bf h})$ is the circulant matrix generated by the vector ${\bf h},$ see the definition in (\ref{formula:circulant}).  
			This matrix is public, since it contains the public key of the NTRU cryptosystem. From ${\bf f}(x)\star {\bf h}(x)\equiv {\bf g}(x)\pmod{q},$ there is a polynomial ${\bf b}(x)\in R$ such that $ {\bf f}(x)\star {\bf h}(x)- q{\bf b}(x)= {\bf g}(x),$ so considering polynomials as vectors we get ${\bf f}C({\bf h})-q{\bf b}={\bf g},$ thus $({\bf f},{-\bf b})M_{\bf h} = ({\bf f},{\bf g}).$ That is, $({\bf f},{\bf g})\in L_{\bf h}.$
	We can see that  
	\begin{equation}\label{definition_of_lattice}
	L_{\bf h} = \{ ({\bf u}(x),{\bf v}(x))\in R^2 : {\bf u}(x)\star {\bf h}(x) \equiv {\bf v}(x) \pmod{q}\}
	\end{equation}
	or 
	\[
	L_{\bf h} = \{ ({\bf u},{\bf v})\in {\mathbb{Z}}^{2N} : {\bf u}C({\bf h}) = {\bf v} \}.
	\]
	Thus, the problem of finding the private key $({\bf f},{\bf g})\in {\mathbb{Z}}_q^{2N},$ comes down to find a short vector in the integer lattice $L_{\bf h}.$ Note that, $L_{\bf h}$ is a $q-$ary lattice, i.e. $q{\mathbb{Z}}^{2N}\subset L_{\bf h}.$ 
	\subsubsection{CVP and NTRU}\label{CVPNTRU}
Closest Vector Problem (CVP) can be used to recover the message and not the private key as previous in subsection \ref{SVPNTRU}. To see how we apply CVP to recover the message ${\bf m},$ we first note that $(p{\bf r},{\bf e}-{\bf m})\in L_{\bf h}.$ Indeed, from the form of lattice $L_{\bf h}$, see (\ref{definition_of_lattice}), and from the encryption of ${\bf m}$, see (\ref{definition_of_encryption}), we get 
$p{\bf r}\star {\bf h} = {\bf e}-{\bf m}\pmod{q},$ therefore $(p{\bf r},{\bf e}-{\bf m})\in L_{\bf h}.$ 
Now, we write 
$$({\bf 0}_N,{\bf e}) = (p{\bf r} - p{\bf r}, (p{\bf r}\star {\bf h} +{\bf m}){\ \rm mod}{q})=$$ 
$$(p{\bf r},p{\bf r}\star {\bf h}{\ \rm mod}{q}) + (-p{\bf r},{\bf m}). $$
Finally, the vector $(p{\bf r},p{\bf r}\star {\bf h}{\ \rm mod}{q})=(p{\bf r},{\bf e} - {\bf m})\in L_{\bf h}$ and the vector 
$(-p{\bf r},m)$ is quite short. To see this and assuming that $p=3$ i.e. ${\bf r}, {\bf m}\in \{-1,0,1\}^N,$ we get
$${\rm{dist}}\big( (p {\bf r},{\bf e} - {\bf m}),({\bf 0},{\bf e})\big)=||(p{\bf r},{\bf e} - {\bf m})- ({\bf 0},{\bf e})||=||(p{\bf r},-{\bf m})||\leq \sqrt{10N}.$$ 
Therefore, in order to find ${\bf m},$ we apply CVP in $L_{\bf h}$ with target vector $({\bf 0}_N,{\bf e}).$ We generalize this method by using a modification of the target vector which is closer to the previous lattice, see Remark \ref{remark:CVP}. 
\ \\\\
	{\bf Roadmap.}
	In section \ref{sec:previous_work} we provide the bibliography and results that concerns attacks to NTRU and also we present our contribution.
	In section \ref{sec:auxiliary} we present some auxiliary results which we shall use in our attack in the next section \ref{attack_to_ntru}. In subsection \ref{subsection:the_attack} we provide the algorithm of the attack and we depict in some examples, using Sagemath and Fpylll, the success of our attack. In the final section, we end up by presenting some concluding remarks and commenting possible future work.

\section{Previous work-Our Contribution}\label{sec:previous_work}
	In 1997 Coppersmith and Shamir \cite{CopSha97} proposed a lattice based attack on the NTRU cryptosystem. Their attack uses lattice reduction to find the private key\footnote{This attack does not have any relation with the Coppersmith attack as used, for instance in RSA (\cite{cop-rsa}) or (EC)DSA (for instance see \cite{blake,draz-dsa})}. In \cite{gentry} the author proposed lattice attacks that are efficient, when $N$ is composite, by reducing lattices of small dimension to find partial information about the secret key. Furthermore, in \cite{May} May used a different class of lattices, which he called {\it {run-lattices}}. This class of lattices is constructed from the classic NTRU-lattice by multiplying columns $N+1$ till $N+r$ by a constant $\theta$.

	In \cite{Silverman}, Silverman  generalized May's idea and he proposed a method where he selects $r$ coefficients and then force them  equal to zero by reducing the dimension of the lattice.  In \cite{Gama-Nguyen} the authors present some new chosen-ciphertext attacks. The attacks exploit the decryption failures. 
	
	In \cite{Odlyzko}, the authors presented the meet-in-the middle attack, which was first observed by Andrew Odlyzko. The idea of this attack is to split the search space for the secret key into two parts and use a collision search algorithm. Thus, one reduces the steps by a square root. The drawback is that it needs a significant amount of memory. 
	
	In \cite{hybrid attack} Howgrave-Graham proposed the improved Hybrid Attack. This is a combination of lattice reduction and meet-in-the-middle techniques. He compared his attack with the meet-in-the-middle proposed by Odlyzko and he deduced that the algorithm of Hybrid Attack requires about $2^{60.3}$ loops in order to recover the private key of \texttt{ee251ep6} parameter set. On the other hand Odlyzko's attack requires about $2^{84.3}$ loops. What is more, it requires a factor of $2^{24.6}$ less storage. This is the most practical attack, but it also needs exponential time. The Hybrid Attack has been used to estimate the security of many lattice-based cryptographic schemes. In \cite{Buchmach, Wunderer} the authors support that the analysis in \cite{hybrid attack} as well as those in other schemes are not entirely satisfactory, leading to unreliable estimates. Specifically the authors in \cite{Buchmach} improved the analysis of the runtime of the attack by proving that in some cases, meet-in-the-middle attack is better than  Hybrid Attack. Furthermore, they proposed a generalized version of the hybrid attack for solving SVP and BDD problems in $q-$ary lattices. 
	In 2016, Albrecht, Bai and Ducas \cite{Albrecht} and independently Cheon, Jeong and Lee \cite{Cheon} proposed  much the same methods to attack the NTRU cryptosystem with larger modulus than in the NTRUEncrypt standard. The main idea is to decrease the dimension of the NTRU lattice using the multiplication matrix by the norm (resp. trace) of the public key in some subfield. In \cite{Kirchner} the authors  presented a new variant of the subfield attacks which is better than both of the two previous attacks in practice. They proved that in $\mathbb {Q}(\zeta _{2^n}),$ the time complexity is polynomial for  $q=2^{\mathrm {\Omega }(\sqrt{n\log \log n})}.$ Furthermore, they made a comparison between this attack and the hybrid attack, concluding that hybrid attack is better in practice.
	
	Finally, in 2021 Nguyen \cite{Nguyen - Boosting the hybrid} analyses the meet in the middle and the hybrid attack by making some simplifications and further improvements. What is more, he deduced that the security estimates of the NTRU finalist in NIST’s post-quantum standardization need to be revised.
	
	Another very interesting line of research presented in 2005 by Silverman, Smart and Vercaturen in \cite{witt} and extended in 2009 by  Bourgeois and Faugère in \cite{gerald}, where they used Witt vectors to reduce the NTRU problem to a multivariate quadratic system  over the Galois field with two elements.
	\subsection{Our contribution}
	There is the classic message recovery attack using closest vector problem, see subsection \ref{CVPNTRU}. In the present paper we generalize the classic CVP attack. In fact we apply a CVP attack to recover the message ${\bf m}$ without using the NTRU lattice based on the public key. We remark here that there not many results in this direction, i.e. attacks based on CVP. On the other hand, there are plenty of such results, i.e. based on CVP, for other cryptosystems, for instance (EC)DSA (Digital Signatur Algorithm). To our knowledge, in the bibliography, there are not results, concerning attacks to NTRU that are based on CVP on some lattice, except the classic that we have already
presented in subsection \ref{CVPNTRU}.
	
In fact we apply CVP to a lattice ${\mathcal{L}}_{\bf a}$ (instead of $L_{\bf h}$) for some fixed and suitably chosen vector ${\bf a}\in{\mathbb{Z}}^{2N}$ and target vector $({\bf 0}_N,{\bf a}\star {\bf e} + {\bf E})$ (instead of (${\bf 0}_N,{\bf e})$ in the classic CVP attack), for some suitable vector ${\bf E}$. The new lattice does not depend on the public polynomial ${\bf h}(x)$ and depends only on $N, q,$ and a (real) parameter ${y}.$ The new idea here is that, this new lattice ${\mathcal{L}}_{\bf a}$ allows us to provide a deterministic attack to NTRU under the assumption that we know an approximation of the unknown vector ${\bf E}$ and assuming that we have a CVP oracle. 

Furthermore, in practice we implement the attack using Babai's nearest plane algorithm on the lattice ${\mathcal{L}}_{\bf a}$, where the vector ${\bf a}$ depends on $N, q,$ and a fixed real parameter $y.$  The target vector, as we shall see, is a sum of two vectors. The first vector is known, but the second vector needs some guesses on the part of the attacker. The unknown part of target vector is a multiple of the nonce ${\bf r}$. If, for instance, we have a weak generator for ${\bf r}$, we might predict some digits of ternary polynomial ${\bf r}$ and so we get a better estimate for the target vector. In our examples, the previous approximation is provided by a suitable oracle. It may seem unrealistic to have such an oracle in practice, but a weak random generator or an implementation of a side channel attack, it could provide us such an oracle.
 
Finally, since the lattice does not depend on the public key ${\bf h}$, we can use a LLL/BKZ reduction only one time, as far as $(N,q,y)$ remain the same. So, if the keys changed, we do not have to repeat the reduction step of the closest vector problem. 

	 As far as we know this attack is new. We shall present some experiments, where we show that our attack is successful. In fact, in subsection \ref{examples}\footnote{For the code see \url{https://github.com/drazioti/ntru}}, in examples 5 and 6, we consider the state of the art parameters : ntruhps2048509 and ntruhps2048677 i.e. $(N,q)=(509,2048)$ and $(677,2048),$ respectively.
	 These parameters were recommended in the 3rd NTRU submission in NIST \cite{ntru-third}.   		
	\section{Auxiliary Results} \label{sec:auxiliary}
	Let $N>2$ be a prime number and ${\bf a}=(a_0,\dots,a_{N-1})$ be a random vector of $R_q.$ With ${\mathcal{L}}_{\bf a}$ we denote the lattice generated by the rows of the matrix,	
	
	\begin{equation}\label{definition_of_matrix}
		M_{\bf a}=
		\left[\begin{array}{cccc|cccc}
			1 &     0 & \dots &   0      &   a_0    &  a_1   & \dots & a_{N-1}  \\
			0&     1 & \dots &   0       &   a_{N-1}&  a_0   & \dots & a_{N-2}  \\
			\vdots & \vdots  &  \ddots & \vdots &   \vdots & \vdots & \ddots & \vdots   \\
			0 & 0 & \dots & 1            &   a_1    &   a_2  &  \dots &   a_0  \\
			\hline
			0 & 0 &  \dots &    0        &     q     & 0     &  \dots &    0    \\
			0 & 0 &  \dots &    0        &     0     & q   &  \dots &    0    \\
			\vdots & \vdots  &  \ddots & \vdots &   \vdots & \vdots & \ddots & \vdots   \\
			0 & 0 &  \dots &    0        &     0     & 0   &  \dots &    q    \\
		\end{array}\right]
		\end{equation}
	For ${\mathcal{L}}_{\bf a}$ we make the following assumption :\\
	{\bf Assumption}. Let $q$ be a positive integer and $y\geq 1$ be a real number. We assume that 
	\begin{equation}\label{assumption}
	\lambda_1({\mathcal{L}}_{\bf a}) > q^{1/y}.
	\end{equation}
	Since it is not possible to compute $\lambda_1$ for large values of $N,$ we can not really check the validity of this inequality. We shall use Gaussian heuristic to get a more practical form of this inequality.
		The Gaussian heuristic $GH(\mathcal{L}_{\bf a})$ for ${L}_{\bf a}$ suggests that the length of a shortest vector is approximately, 
		$$\sqrt{\frac{2N}{2 \pi e}}\det(M_{{\bf a}})^{1/2N}=\sqrt{\frac{qN}{\pi e}}\approx 0.35\times \sqrt{qN}.$$ 
		However, according to \cite[Proposition 6.61]{hoffstein} the private key is about $\frac{1}{\sqrt{N}}$ the bound suggested by Gaussian heuristic. So, we update the previous bound $:$
		$$\lambda_1\approx\frac{1}{\sqrt{N}}0.35\times \sqrt{qN} = 0.35\sqrt{q}.$$ Now, under the previous heuristic, the assumption is written,  
\begin{equation}\label{heuristic_inequality}
0.35\sqrt{q}>q^{1/y}.
\end{equation}		
			It is easy to check that there are plenty of $(N,q,y)$ that satisfy the two previous inequalities. For instance, see Appendix B, for some examples where the heuristic inequality (\ref{heuristic_inequality}) and the inequality (\ref{assumption}) are both valid.

	The following Proposition will allows us to implement an attack on NTRU. 
	\begin{proposition}\label{prop1}
	Let
		$${\bf a}(x) = a_0 + a_1x +\cdots + a_{N-1}x^{N-1}$$ be a polynomial in $R_q.$
		 Consider the equation in $R_q$,
		$${\bf a}(x) \star {\bf m}(x) = {\bf b}(x) + {\bf c}(x),$$
		where the unknowns are the polynomials
		$$ {\bf m}(x) = \sum_{j=0}^{N-1}m_jx^{j},\  \  {\bf c}(x) = \sum_{j=0}^{N-1}c_jx^j,$$
		and ${\bf b}(x) =  b_0+b_1x+\cdots + b_{N-1}x^{N-1}$  with
		$b_0,\ldots,b_{N-1}\in\{0,\ldots,q-1\}$.
		Let   ${\bf V}=({\bf m},{\bf c})$ be a vector
		with coordinates the coefficients of  a solution $({\bf m}(x), {\bf c}(x)).$ If we can find  a vector $ {\bf E}=(E_0,E_1,\ldots,E_{N-1},E_N,\ldots,E_{2N-1})  \in \mathbb{Z}^{2N}$  satisfying
		$$\|{\bf V}-{\bf E}\|<\frac{1}{2}q^{\frac{1}{y}},$$
		then we can determine the solution vector ${\bf V}$ using a CVP oracle.
	\end{proposition}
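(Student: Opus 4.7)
The plan is to exhibit a lattice point of $\mathcal{L}_{\bf a}$ whose distance from a computable target vector equals $\|{\bf V}-{\bf E}\|$, and then invoke the CVP oracle together with the assumption on $\lambda_1(\mathcal{L}_{\bf a})$ to argue that this lattice point is the unique closest one.

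First, I would translate the equation ${\bf a}(x)\star{\bf m}(x) = {\bf b}(x) + {\bf c}(x)$ in $R_q$ into an integer identity: there exists ${\bf k}\in\mathbb{Z}^{N}$ with ${\bf m}\star{\bf a} = {\bf b} + {\bf c} + q{\bf k}$ as integer vectors. Using the circulant description ${\bf m}\star {\bf a} = {\bf m}\,C({\bf a})$ and the block form of $M_{\bf a}$ in \eqref{definition_of_matrix}, the combination of the top $N$ rows with coefficients $m_0,\dots,m_{N-1}$ and of the bottom $N$ rows with coefficients $-k_0,\dots,-k_{N-1}$ produces the vector
\[
{\bf L} \;=\; ({\bf m},\,{\bf m}\star{\bf a}) - ({\bf 0}_N,\,q{\bf k}) \;=\; ({\bf m},\,{\bf b}+{\bf c}) \;\in\; \mathcal{L}_{\bf a}.
\]

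Next, I would set the CVP target to be the explicitly computable vector ${\bf T} := {\bf E} + ({\bf 0}_N,{\bf b})$ and note that
\[
\|{\bf L} - {\bf T}\| \;=\; \bigl\|({\bf m},{\bf c}) - {\bf E}\bigr\| \;=\; \|{\bf V} - {\bf E}\| \;<\; \tfrac{1}{2}\,q^{1/y}.
\]
Let ${\bf L}'\in\mathcal{L}_{\bf a}$ be the output of the CVP oracle on input $({\mathcal{L}}_{\bf a},{\bf T})$, so that $\|{\bf L}'-{\bf T}\|\le\|{\bf L}-{\bf T}\|<\tfrac{1}{2}q^{1/y}$. Then by the triangle inequality
\[
\|{\bf L} - {\bf L}'\| \;\le\; \|{\bf L}-{\bf T}\| + \|{\bf T}-{\bf L}'\| \;<\; q^{1/y} \;<\; \lambda_1(\mathcal{L}_{\bf a}),
\]
by assumption \eqref{assumption}. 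Since ${\bf L}-{\bf L}'\in\mathcal{L}_{\bf a}$, this forces ${\bf L}={\bf L}'$, and consequently ${\bf V} = {\bf L}' - ({\bf 0}_N,{\bf b})$ is recovered deterministically from the CVP oracle output.

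The only mildly delicate step is verifying that $({\bf m},{\bf b}+{\bf c})$ genuinely lies in $\mathcal{L}_{\bf a}$ when one allows ${\bf m}$ to be an arbitrary integer lift of a polynomial in $R_q$; this is a direct consequence of the block structure of $M_{\bf a}$ together with the $q$-ary rows, and involves no reduction modulo $q$ on the first $N$ coordinates. Everything else is a standard triangle-inequality argument of the kind used for the classic CVP attack on NTRU described in Subsection~\ref{CVPNTRU}; the novelty compared to that situation is precisely that the underlying lattice $\mathcal{L}_{\bf a}$ here is independent of the public key~${\bf h}$, while the target vector ${\bf T}$ absorbs the ciphertext-dependent information via ${\bf b}$ and the approximation ${\bf E}$.
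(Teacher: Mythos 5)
Your proposal is correct and follows essentially the same route as the paper: you exhibit the lattice point $({\bf m},{\bf b}+{\bf c})\in\mathcal{L}_{\bf a}$, take the target ${\bf E}+({\bf 0}_N,{\bf b})$, and combine the triangle inequality with the standing assumption $\lambda_1(\mathcal{L}_{\bf a})>q^{1/y}$ to force the oracle's output to coincide with that lattice point. The paper's proof is identical in substance, differing only in notation (it writes the lattice membership as $({\bf m},-{\bf v})M_{\bf a}={\bf u}$ rather than via the integer lift ${\bf k}$).
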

	
	\begin{proof}	
		Let ${\bf u}=(m_0,m_1,\ldots,m_{N-1},b_0 + c_0,\ldots,b_{N-1} + c_{N-1})$.
		We construct the lattice ${\mathcal{L}}_{\bf a}$ spanned by the rows of the $2N\times 2N$  matrix $:$ \\
		$$
		M_{\bf a}= 
		\left[\begin{array}{c|c}
			I_N & C({\bf a})  \\
			\hline
			{\bf 0}_N & qI_N   \\
		\end{array}\right]
		$$ 
		It is easy to see that ${\bf u}$ is a lattice point. Indeed, 
		there is a polynomial ${\bf v}(x)\in R$ such that,
		$${\bf a}(x)\star {\bf m}(x) = {\bf b}(x) + {\bf c}(x) + q{\bf v}(x).$$
		Then, 
		$$({\bf m},-{\bf v})M_{\bf a}=({\bf m},-{\bf v})\left[\begin{array}{c|c}
			I_N & C({\bf a})  \\
			\hline
			{\bf 0}_N & qI_N   \\
		\end{array}\right]=({\bf m},{\bf m}C({\bf a}) - q{\bf v})=$$
		$$({\bf m},{\bf m}\star {\bf a}-q{\bf v})=({\bf m},{\bf a}\star {\bf m}-q{\bf v})=({\bf m},{\bf b}+{\bf c})={\bf u}.$$
		Let 
		$$ {\bf b}_{target} =  (E_0,E_1,\ldots,E_{N-1},b_0 + E_N,\ldots, b_{N-1} + E_{2N-1}).$$
		Remark that,
		$${\bf b}_{target} = {\bf E} + ({\bf 0}_N,{\bf b}).$$ 
		So,
		$${\bf b}_{target} + {\bf V} = {\bf E} + ({\bf 0}_N,{\bf b}) + {\bf V} 
		= {\bf E} + ({\bf m},{\bf b}+{\bf c}) = 
		{\bf E}+{\bf u}.$$
		Therefore,
		$$	\|{\bf u}- {\bf b}_{target}\| = \|{\bf V}- {\bf E} \| < \frac{1}{2}\, q^{\frac{1}{y}} .$$
		We call a CVP Oracle with input the lattice ${\mathcal{L}}_{\bf a}$ and target vector 
		$ {\bf b}_{target}$ and we get a vector ${\bf w} \in {\mathcal{L}}_{\bf a}$ such that,
		$$\|{\bf w}- {\bf b}_{target}\| \leq	\|{\bf u}- {\bf b}_{target}\| = \|{\bf V}- {\bf E} \| < \frac{1}{2}\, q^{\frac{1}{y}} .$$
		So we get, 
		$$\|{\bf w}- {\bf u}\| \leq	\|{\bf w}- {\bf b}_{target}\| + \|{\bf u}- {\bf b}_{target}\|  < \frac{1}{2}\, q^{\frac{1}{y}} + \frac{1}{2}\, q^{\frac{1}{y}} = q^{\frac{1}{y}} .$$ Our assumption (\ref{assumption}) implies that every lattice vector of ${\mathcal{L}}_{\bf a}$ has length at least $q^{1/y}.$ Since ${\bf u}$ and ${\bf v}$ are lattice vectors, then also ${\bf u}-{\bf v}$ is a lattice vector, so $||{\bf u}-{\bf v}||>q^{1/y}.$ Therefore,  $ {\bf w}- {\bf u} = {\bf 0} $ or $ {\bf w} = {\bf u}$. The result follows.
	\end{proof}
	\begin{remark}\label{remark:CVP}
	We note that, if we take the original lattice $L_{\bf h},$ we can have a similar Proposition for the classic CVP (see \ref{SVPNTRU}), taking as target vector $({\bf 0}_N,{\bf e})+{\bf E}.$ So, the Proposition generalizes the classic CVP attack. 
	\end{remark}
	\section{The Attack to NTRU}\label{attack_to_ntru}
	
	In this section we show how we can find the plaintext ${\bf m}(x)$ using the previous Proposition \ref{prop1}. 
	In this Proposition we provide only a sufficient condition. It may happen to find the message, even if the inequality is not being satisfied.		
	The encrypted message ${\bf e}(x)$ is given by the following relation,
	$${\bf e}(x) \equiv p{\bf r}(x) \star{\bf h}(x) + {\bf m}(x) \, \bmod\,  q.$$
	By multiplying both sides with ${\bf a}(x)$ (where ${\bf a}(x)$ is a random polynomial of ${\mathcal{R}}_q$) we get,
	$${\bf a}(x)*{\bf m}(x) \equiv{\bf a}(x)*{\bf e}(x) - p{\bf a}(x)*{\bf r}(x)*{\bf h}(x)  \, \bmod\,  q $$
	or
	\begin{equation}\label{above_equation}
		{\bf a}(x)*{\bf m}(x) \equiv {\bf b}(x) + {\bf c}(x)  \, \bmod\,  q.
	\end{equation}
	We have set
	$${\bf b}(x)= {\bf a}(x)*{\bf e}(x) \mod{q} $$
and
	 $$ {\bf c}(x)= - p{\bf a}(x)*{\bf r}(x)*{\bf h}(x) \mod{q}.$$
	In equation (\ref{above_equation}), ${\bf a}(x)$ and ${\bf b}(x)$ are known.
	We assume that, the following vector
	\begin{equation}\label{vector_V}
		{\bf V}=(m_0,m_1,\ldots,m_{N-1},c_0,c_1\ldots,c_{N-1})
	\end{equation} 
	is a solution of equivalence (\ref{above_equation}).
	We construct the lattice $\mathcal{L}_{\bf a}$ spanned by the rows of the $2N\times 2N$  matrix  $
	M_{\bf a}$ as in (\ref{definition_of_matrix}).
	Let    $ {\bf E}=(E_0,E_1,\ldots,E_{N-1},E_N,\ldots,E_{2N-1})  \in \mathbb{Z}^{2N}$  such that,
	$$\|{\bf V}-{\bf E}\|<\frac{1}{2}q^{\frac{1}{y}}. $$
	We call our CVP Oracle with input the lattice ${\mathcal{L}}_{\bf a}$ and target vector 
	$$ {\bf b}_{target} = (E_0,E_1,\ldots,E_{N-1},b_0 + E_N,\ldots, b_{N-1} + E_{2N-1}).$$ Say that we get a vector ${\bf w} \in {\mathcal{L}}_{\bf a}$. Then, Proposition \ref{prop1} yields 
	$$w_0 = m_0, \ldots ,w_{N-1 }= m_{N-1}.$$ In this way we compute the message   ${\bf m}(x)$.
	
	\subsection{The Attack}\label{subsection:the_attack}	
	In this subsection, we present our message recovery attack based on the previous analysis.
	
	\ \\
	{\bf Algorithm 1}\\
	{\texttt{INPUT:  $N:$ prime, $y\in{\mathbb{R}}_{>1},$
			${\bf e}$ the encryption of a message ${\bf m}$, ${\bf E}\in {\mathbb{Z}}^{2N},$ and $q$ which is a prime or a prime power.}} \\
	{\texttt{OUTPUT: a message {${\bf m}'$}}}\\\\
	\texttt{1.}   ${\bf a}\xleftarrow{\$} 
	\{0,1\}^{N-1}\times \{\lfloor Nq^{1/y}\}\rfloor$\\
	\texttt{2.}  Construct the matrix $M_{\bf a}$ and the lattice $\mathcal{L}_{\bf a}$ generated by the rows of $M_{\bf a}$\\
	\texttt{3.}    ${\bf b}\leftarrow {\bf a}\star {\bf e}$ in $R_q$ \\
	\texttt{4.} ${\bf b}_{target}\leftarrow ({\bf 0}_N;{\bf b})+{\bf E}$\\
	\texttt{5.} ${\bf w}\leftarrow {\tt babai}({\mathcal{L}}_{\bf a},{\bf b}_{target})$\ \\
	\texttt{6.} ${\bf m'}\leftarrow$ ${\bf w}[0:N-1]$\ \# the $N-$first coordinates of ${\bf w}$ \\
	\texttt{7.} {\bf return} ${\bf m}'$ \\\\	
	Since NTRUEncrypt is used in Key Encapsulation Mechanisms, with the previous attack we expect to find the shared key. Indeed, if the message is right, then we can compute the nonce ${\bf r}$ and then taking a suitable hash we get the shared key. The validity of the key can only be checked indirectly, by starting some encrypted conversation with the owner of the secret key, by using a symmetric cryptosystem.
	 
	In line 1, we choose small entries for the vector ${\bf a}$ except one which is near to $Nq^{1/y}.$ Someone would expect to pick ${\bf a}$ randomly from $R_{q}.$ There are some theoretic arguments for this choice. We explain this in Appendix A.
	In line $6,$ we return the first $N$ coordinates of ${\bf w},$ i.e. $w_0,...,w_{N-1}.$ 
	In line $5$ before applying Babai's algorithm, we reduce the basis by using LLL. The LLL reduction is the same for all $(N,q,y),$ so we can make this step independently from the others, i.e. we can apply LLL in line $2.$
	
	We tried the previous attack for various parameters.
	In all the examples we assume that ${\bf E} = ( {\bf r}_N, {\bf E}'  ),$
	where ${\bf r}_N\in\{-1,0,1\}^N,$ ${\bf E}' = (E_1',....,E_N')$ with $|E_i'- c_i|\leq R,$ 
	and $c_i$ are the coefficients of ${\bf c}(x)$. Note that, ${\bf c}(x)$ is the product in $R_q$ of $-p\star {\bf h}(x)\star {\bf a}(x)$  and the unknown nonce ${\bf r}(x),$ which is a short ternary polynomial. Furthermore, whenever the attack succeeded, we recovered the message ${\bf m},$ in less than a minute. For the implementation we used Sagemath \cite{sage} and for the LLL reduction and Babai algorithm, we used fpylll library \cite{fpylll}. Before we present our 
	experiments, we describe our attack by using an oracle. 
	
	Say we have an oracle ${\mathcal{O}}$ that on input the public key of the system, the parameter $R,$ and a seed, provides 
	us with some integers $E_i', $ $i=1,2,...,N,$ such that $|E_i'- c_i|\leq R.$ 
	Each call to the oracle with a different seed provides a new random set of $E_i'$ with the same property. We assume that the distribution of $E_i'$ with this property is uniform.
	\subsubsection{Examples}\label{examples}
	In our examples, ${\mathcal{L}}_f={\mathcal{T}}(d+1,d)$ and ${\mathcal{L}}_g={\mathcal{L}}_r={\mathcal{T}}(d,d).$ In the document submitted to the third phase of NIST \cite{ntru-third}, the authors recommend for the case of NTRU-HPS, ${\mathcal{L}}_f$ be the set of ternary  polynomials (i.e. with coefficients in $\{-1,0,1\}$) of degree at most $N-2.$ In examples 1-4, we follow Algorithm 1 for the choice of vector ${\bf a}.$ All the examples use LLL and Babai's algorithm, thus are very fast.\\
	{\bf Example 1.} $N = 239, d = 71, p = 3, q=2^8=256, y=2.3, {\bf r}_N={\bf 0}$.\\
	In  this experiment for $R=9$ we got the message after some calls ($<100$). For $R=10$ we did not manage to find the message.
	\ \\
	{\bf Example 2.} $N = 257, d = 91, p = 3, q=2^8=256, y=2.3, {\bf r}_N={\bf 0}$.\\
	In  this experiment for $R=9$ after some calls ($<100$) we got the message. For $R=10$ we did not manage to find the message after 100 calls to the oracle.
	\ \\
	{\bf Example 3.} $N = 283, d = 99, p = 3, q=2^{10}=1024, y=2.3, {\bf r}_N={\bf 0}$.\\
	In  this experiment for $R=16$ after some calls ($<100$) we got the message. For $R=17$ we did not manage to find the message.
	\ \\
	{\bf Example 4.} $N = 307, d = 15, p = 3, q=2^{10}=1024, y=2.5, {\bf r}_N={\bf 0}$.\\
	In  this experiment for $R=18$ we got the message after at most 100 calls to the oracle. For $R=19$ we did not manage to find the message.\ \\
	{\bf Example 5.} $N = 509, d = 10, p = 3, q= 2^{11}, y=2.5, {\bf r}_N={\bf 0}$.\\
For this experiment we picked ${\bf a}$ from $\{-2,2\}^{N-1}\times \{ \lfloor Nq^{1/y} \rfloor\}$	and we randomly shuffle it. In  this experiment for $R=26$ after at most 100 calls we got the message. For $R=27$ we did not manage to find the message after 100 calls. The specific parameters were suggested in \cite{ntru-third} for ntruhps2048509.
\ \\\\
For the next examples we used the following,
$${\bf a}=(-k,-k+1,...,-1,1,2,...,k,\lfloor Nq^{1/y} \rfloor +1),$$
where $k=\frac{N-1}{2}.$\ \\
	{\bf Example 6.} $N = 677, d = 20, p = 3, q= 2^{11}, y=2.5, {\bf r}_N={\bf 0}$.\\ 
	The authors of \cite{ntru-third} for better security suggested	ntruhps2048677. For $R=17$ we got almost immediately the message, assuming that we have the LLL-reduced basis. The LLL reduction took about 15 minutes in Fpylll \cite{fpylll}.  For $R=18$ after 100 calls to the oracle we did not find the message.\ \\
{\bf Example 7.} $N = 557, d = 40, p = 3, q= 2^{13}, y=2.5, {\bf r}_N={\bf 0}$.\\
In  this experiment for $R=38$ after at most 100 calls we got the message. For $R=39$ we did not manage to find the message after 100 calls. The parameters $(N,q)=(557,2^{13})$ were recommended in \cite{HRSS}.
\begin{remark}
In the previous examples the heuristic inequality (\ref{heuristic_inequality}) is not satisfied (but the two sides of the inequality are very close). Proposition \ref{prop1} is not if and only if statement, i.e. if assumption (\ref{assumption}) is satisfied then we get the result of the Proposition. However, it may occur, the assumption be false and the attack may still works.
\end{remark}
	\begin{remark}
		The experiments suggest that if we increase $q=2^e,$ greater values for the range $R$ are allowed in order to get a successful attack. 
	\end{remark}
	\begin{remark}
		Assume that ${\bf r}_N={\bf 0},\ p=3$ and for some $E_i'$ we have $|E_i'-c_i|<R$ $(1\leq i\leq N).$ We consider the notation of section \ref{attack_to_ntru}. Then,
		$$||{\bf V}-{\bf E}||^2=||{\bf m}||^2 + \sum_{i=1}^{N}(E_i'-c_i)^2\leq N+R^2N=N(1+R^2).$$
		Also if $N(1+R^2)<\frac{1}{4}q^{2/y}$ or 
		$$y<\frac{2\log_2{q}}{2+\log_2(N(1+R^2))},$$
		then from Proposition \ref{prop1} we get that a CVP oracle will provide us the solution vector ${\bf V}.$	\end{remark}
	\section{Conclusion}\label{sec:conclusion}
	In this work we used lattice theory to attack the NTRU-HPS cryptosystem. Multiplying the encryption equation with a (random) polynomial, we create an equivalent equation, where an unknown value belongs to a  lattice, for which we can find a lower bound of the first successive minima of the lattice. Then, we choose a suitable target vector and apply Babai's nearest plane algorithm with the hope that the output  is the unknown value. The difficult part is the choice of vector ${\bf E}$, thus we need the help of a suitable oracle. This is a drawback of the attack, since it is difficult in practice to have such an oracle. To address this problem someone has to apply a side channel attack, hoping to get some information about the vector ${\bf c}$ of Proposition \ref{prop1}. Then, the attacker may have a good guess for the vector ${\bf E}.$  For instance, side channel attacks were studied in \cite{aydin,kamal,vizev}. Finally, we provided several examples showing the success of the attack with the use of the oracles.

{\bf Acknowledgment}.
\begin{figure}[!htb]
\centering
\includegraphics[scale=0.5]{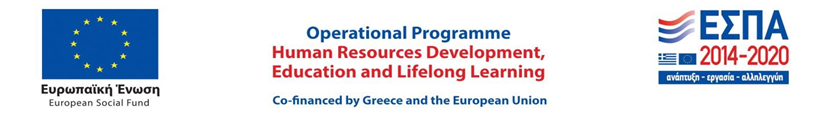}
\end{figure}
Marios Adamoudis is co-financed by Greece and the European Union (European Social Fund-ESF) through the Operational Programme "Human Resources Development, Education and Lifelong Learning" in the context of the Act "Enhancing Human Resources Research Potential by undertaking a Doctoral Research" Sub-action 2: IKY Scholarship Programme for PhD candidates in the Greek Universities.

Finally, the authors sincerely thank professor Poulakis for his helpful suggestions.

\appendix

\section{How to pick ${\bf a}$?}\label{appendixA}

\begin{proposition}\label{oldTheorem1}
		Let $q$ and $N$  be positive integers with  $N=2k+1,$ and $y$ a real number $ \ge 1.$ 		
		We set, 
		\[ 
		a_j(N,q,y)=a_j= \left\{
		\begin{array}{ll}
			j-k, & j=0,1,...,k-1 \\
			j+1-k, & j=k,k+1,...,2k-1=N-2\\
			\lfloor Nq^{1/y} \rfloor + 1, & j=2k=N-1 \\
		\end{array} 
		\right. 
		\]	
		For instance if $N=11, q=2^5, y=1$ we get 
		$${\bf a} = (-5,-4,-3,-2,-1,1,2,3,4,5,353).$$
		We assume that,
		\begin{equation}\label{second_inequality}
			a_{0}^2 + a_{1}^2 + \dots +a_{N-2}^2+ a_{N-1}^2 =\frac{(N^2-1)N}{12} + a_{N-1}^2 < \frac{1}{N} \frac{(q-q^{1/y})^2}{q^{2/y}}
		\end{equation}
		Let  ${\bf a}=(a_j)_j$ and $\mathcal{L}_{\bf a}={\mathcal{L}}(M_{\bf a})$ be the lattice spanned by the rows of the $2N\times 2N$ matrix  
	$$
		M_{\bf a}= 
		\left[\begin{array}{c|c}
			I_N & C({\bf a})  \\
			\hline
			{\bf 0}_N & qI_N   \\
		\end{array}\right]
		$$ 	
		Then, for every ${\bf v}\in \mathcal{L}_{\bf a} - \{{\bf 0}\}$ not belonging to the lattice generated by the rows of the matrix
		$$ 
		\left[\begin{array}{c|c}
			I_N & C({\bf a})  
		\end{array}\right]
		,$$  we have$:$
		$$\|{\bf v}\|> q^{\frac{1}{y}}.$$		
	\end{proposition}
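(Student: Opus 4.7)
The plan is a direct proof by contradiction. First I would parametrize lattice vectors cleanly: every ${\bf v} \in \mathcal{L}_{\bf a}$ admits a unique decomposition ${\bf v} = ({\bf u}, {\bf u} \star {\bf a} + q{\bf w})$ with ${\bf u}, {\bf w} \in \mathbb{Z}^N$, the first $N$ coordinates coming from an integer combination of the top rows of $M_{\bf a}$ and the $q{\bf w}$ term accounting for the contribution of the bottom $N$ rows. The sublattice generated by the rows of $[I_N \mid C({\bf a})]$ is exactly the subset on which ${\bf w} = {\bf 0}$, so the hypothesis that ${\bf v}$ lies outside this sublattice becomes the integer condition ${\bf w} \neq {\bf 0}$.

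Second, assume for contradiction that $\|{\bf v}\| \leq q^{1/y}$. Then $\|{\bf u}\| \leq q^{1/y}$ and $\|{\bf u}\star{\bf a}+q{\bf w}\| \leq q^{1/y}$; since ${\bf w} \in \mathbb{Z}^N \setminus \{{\bf 0}\}$ forces $\|q{\bf w}\| \geq q$, the triangle inequality gives $\|{\bf u}\star{\bf a}\| \geq q - q^{1/y}$. The circulant/convolution structure lets me bound the other direction: each entry $({\bf u}\star{\bf a})_k$ is a dot product of ${\bf u}$ with a cyclic shift of ${\bf a}$, so Cauchy--Schwarz gives $|({\bf u}\star{\bf a})_k| \leq \|{\bf u}\|\,\|{\bf a}\|$, and summing the squares over the $N$ coordinates yields $\|{\bf u}\star{\bf a}\|^2 \leq N \|{\bf u}\|^2 \|{\bf a}\|^2 \leq N q^{2/y}\|{\bf a}\|^2$. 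Combining the two estimates gives $(q - q^{1/y})^2 \leq N q^{2/y} \|{\bf a}\|^2$, i.e.\ $\|{\bf a}\|^2 \geq \frac{(q-q^{1/y})^2}{N q^{2/y}}$.

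Third, I would compute $\|{\bf a}\|^2$ from the explicit definition in the statement: the coordinates $a_0,\ldots,a_{N-2}$ are exactly $\pm 1, \pm 2, \ldots, \pm k$, so $\sum_{j=0}^{N-2} a_j^2 = 2\sum_{j=1}^{k} j^2 = \frac{k(k+1)(2k+1)}{3} = \frac{(N^2-1)N}{12}$, giving $\|{\bf a}\|^2 = \frac{(N^2-1)N}{12} + a_{N-1}^2$. Hypothesis (\ref{second_inequality}) asserts this quantity is strictly less than $\frac{1}{N}\frac{(q-q^{1/y})^2}{q^{2/y}}$, which directly contradicts the lower bound on $\|{\bf a}\|^2$ just derived. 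Hence no such short ${\bf v}$ can exist, proving $\|{\bf v}\| > q^{1/y}$.

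The only delicate point is the aggregation step that passes from a pointwise Cauchy--Schwarz bound on $({\bf u}\star{\bf a})_k$ to the global $\ell_2$ estimate on ${\bf u}\star{\bf a}$; this loses a factor of $\sqrt{N}$ and is precisely the source of the otherwise mysterious $1/N$ on the right-hand side of (\ref{second_inequality}). A sharper Fourier-analytic bound on the operator norm of $C({\bf a})$ could in principle weaken the hypothesis, but for the proof as structured the Cauchy--Schwarz estimate is the cleanest tool and matches the assumed inequality exactly.
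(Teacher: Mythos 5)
Your proof is correct and follows essentially the same route as the paper's: a contradiction argument in which the nonvanishing $q\mathbf{w}$ component forces $\mathbf{u}\star\mathbf{a}$ to be large, while Cauchy--Schwarz against $\|\mathbf{a}\|^2$ and hypothesis (\ref{second_inequality}) force it to be small. The only (cosmetic) difference is that the paper closes the contradiction coordinate-wise, isolating the single entry where some $l_{N+j}\neq 0$ and showing that entry alone exceeds $q^{1/y}$, whereas you aggregate into a global $\ell_2$ bound on $\mathbf{u}\star\mathbf{a}$; both uses of Cauchy--Schwarz and of the assumed inequality are identical.
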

	
	\begin{proof}
		Assume that there is a non-zero vector ${\bf v}\in \mathcal{L}_{\bf a}$ such that, 
		$$\|{\bf v}\|\leq q^{\frac{1}{y}}.$$ Let ${\bf b}_1,\dots,{\bf b}_{2N}$ be the rows of the matrix $M_{\bf a}.$ We define the following symbol, for $a$ integer and $n$ positive integer,
		\[ 
		[a]_n= \left\{
		\begin{array}{ll}
			a+n, & a<0<n \\
			a, & 0\le a<n\\
		\end{array} 
		\right. 
		\]	
		Since ${\bf v}\in \mathcal{L}_{\bf a},$ there are integers $l_1,\ldots,l_{2N}$  such that,
		$${\bf v}=l_1{\bf b}_1+\cdots+l_{2N}{\bf b}_{2N} = $$
		$$\Big(l_1,l_2,\dots,l_{N},\sum_{j=1}^N l_ja_{[1-j]_{N}}+ql_{N+1},\sum_{j=1}^N l_ja_{[2-j]_{N}}+ql_{N+2},\dots, \sum_{j=1}^N l_ja_{[N-j]_N}+ql_{2N}\Big).$$
		Then we get
		\begin{equation}\label{second} 
			\begin{dcases}
				|l_1|,|l_2|,\dots,|l_N| \le q^{\frac{1}{y}}\\
				|l_1a_0+l_2a_{N-1}+\dots+ l_Na_{1} + ql_{N+1}| \le q^{\frac{1}{y}}\\
				|l_1a_{1}+l_2a_0+\dots+l_Na_{2}+ql_{N+2}| \le q^{\frac{1}{y}}\\  
				\dots\\
				|l_1a_{N-1}+l_2a_{N-2}+\dots+l_Na_0+ql_{2N}| \le q^{\frac{1}{y}}
			\end{dcases}
		\end{equation}	
		Let $(i_1,...,i_N)$ be a random right rotation of $(0,1,...,N-1).$
		By the Cauchy-Schwarz inequality and relations (\ref{second_inequality}) and (\ref{second}) we 
		obtain: 
		$$|l_1a_{i_1}+l_2a_{i_2}+\dots+ l_Na_{i_n}|^2 \le (l_1^2 + l_2^2 +\dots+l_N^2 ) (a_0^2 + a_1^2 +\dots+a_{N-1}^2 )\le  \  \ \ \  \ \ \ \ \ \ \ \ \ \ $$
		$$( q^{\frac{2}{y}}+q^{\frac{2}{y}} + \dots +q^{\frac{2}{y}} )(a_0^2 + a_1^2 +\dots+a_{N-1}^2 )= Nq^{\frac{2}{y}} (a_0^2 + a_1^2 +\dots+a_{N-1}^2 ) < (q-q^{\frac{1}{y}})^2.$$
		Therefore, we have
		\begin{equation}\label{third} 
			|l_1a_{i_1}+l_2a_{i_2}+\dots+ l_Na_{i_n}|< q-q^{\frac{1}{y}}.
		\end{equation}
		Since ${\bf v}$ does not belong to the lattice generated by the rows of the matrix,
		$$ 
		\left[\begin{array}{c|c}
			I_N & C({\bf a})  
		\end{array}\right]
		,$$  
not all the integers $l_{N+1},l_{N+2},\dots,l_{2N}$ will 
be zero.
		Say $l_{N+1} \neq 0.$ Thus, we get
		\begin{eqnarray*}
			\lefteqn{\|{\bf v}\| \ge |l_1a_0 + l_2a_{N-1}+\dots+ l_Na_{1} +l_{N+1}q|
				\ge }\\
			& & |l_{N+1 }|q-|l_1a_0+l_2a_{N-1}+\dots+ l_Na_{1}| \ge \\ & & q -|l_1a_0+l_2a_{N-1}+\dots+ l_Na_{1}| > q^{\frac{1}{y}}, 
		\end{eqnarray*}
		which is a contradiction. The Proposition follows.
	\end{proof}
	This Proposition provide us some constraints on how to choose ${\bf a}.$ The advantage here is that, we do not need the inequality (\ref{assumption}), which is our assumption. Inequality (\ref{second_inequality}) can be checked easily for every $N,q,$ and $y,$ whereas our assumption is hard to be checked for large values of $N.$
\section{Some examples that satisfy the inequalities (\ref{assumption}) and (\ref{heuristic_inequality})}
	To produce instances\footnote{For the code see \url{https://github.com/drazioti/ntru/blob/main/appendix.ipynb}} that satisfy the two inequalities we need to execute exact SVP to compute $\lambda_1.$ We use moderate values of $N$ and the SVP function of Fpylll. The heuristic inequality (\ref{heuristic_inequality}) may be true, whereas the assumption (\ref{assumption}) may be false.  Usually, for large enough values of $y$ both inequalities are satisfied. For instance, for 
	$$(N,q,y) = (21,2^5,1.5) , (23,2^5,2) , (25,2^9,2), (27,2^9,2)$$
	and for randomly chosen vector ${\bf a},$ both the inequalities are satisfied.
	\end{document}